\documentclass[aps,twocolumn,notitlepage,pra,superscriptaddress,footinbib]{revtex4-1}

\usepackage[usenames,dvipsnames]{color}

\usepackage{gensymb} 
\usepackage{physics} 
\usepackage{amsmath} 
\usepackage{amssymb} 
\usepackage{bm} 
\usepackage{bbm} 
\usepackage{braket} 
\usepackage[version=4]{mhchem} 
\usepackage{latexsym} 
\usepackage{tensor}
\usepackage{tcolorbox}
\usepackage{mathtools}
\usepackage{mathrsfs}

\usepackage{amsthm,thmtools}
\declaretheorem[
]{theorem}

\usepackage{xr}

\newcommand{\bs}[1]{\pmb{#1}} 

\usepackage{graphicx} 
\graphicspath{{./figures/}} 
\usepackage{stackengine} 
\usepackage[caption=false]{subfig} 
\usepackage[inline]{enumitem} 
\setlist[enumerate,1]{label={(\roman*)}} 
\setlist{nolistsep} 
\usepackage{comment} 

\usepackage{multirow}
\usepackage{hhline}

\usepackage{CJKutf8}

\usepackage{hyperref}
\usepackage{natbib}
\hypersetup{
 breaklinks=true,
 colorlinks=true,
 linkcolor=red,
 filecolor=magenta,
 urlcolor=cyan,
 citecolor=blue,
}
\usepackage[capitalize]{cleveref}

\begin{document}
\begin{CJK}{UTF8}{gbsn}
\renewcommand{\labelenumii}{\theenumii}
\renewcommand{\theenumii}{\theenumi.\arabic{enumii}.}

\newcommand{\QuICS}{Joint Center for Quantum Information and Computer Science, National Institute of Standards and Technology and
 University of Maryland, College Park, Maryland 20742, USA}
\newcommand{\JQI}{Joint Quantum Institute, National Institute of Standards and Technology and
 University of Maryland, College Park, Maryland 20742, USA}
\newcommand{\USNA}{Volgenau Department of Physics, United States Naval Academy, Annapolis, MD 21402, USA}

\newcommand{\thetitle}{Error-corrected function estimation advantage in multiparameter Hamiltonians}

\title{\thetitle}

\author{Erfan~Abbasgholinejad}
\affiliation{\QuICS}
\affiliation{\JQI}
\author{Lorcán~O.~Conlon}
\affiliation{\QuICS}
\affiliation{\JQI}
\author{Sean~R.~Muleady}
\affiliation{\QuICS}
\affiliation{\JQI}
\author{Jacob~Bringewatt}
\affiliation{\USNA}
\author{Ali~Fahimniya}
\affiliation{\QuICS}
\affiliation{\JQI}
\author{Yu-Xin Wang (王语馨)}
\affiliation{\QuICS}
\author{Alexey V. Gorshkov}
\affiliation{\QuICS}
\affiliation{\JQI}

\date{\today}

\begin{abstract}
We establish the ultimate precision limits for estimating a function of multiple Hamiltonian parameters in the presence of Markovian noise. By reducing multiparameter function estimation to an optimization over effective single-parameter embeddings, we derive tight bounds on the quantum Fisher information. We identify a necessary and sufficient functional Hamiltonian-not-in-Lindblad-span condition for Heisenberg-limited scaling in time. When this condition holds, we construct a code that simultaneously removes the noise and nuisance Hamiltonian parameters while preserving the target signal. When it fails, we derive the optimal standard quantum-limit coefficient and show that it is asymptotically attainable using approximate quantum error correction. Finally, we demonstrate that direct function estimation can substantially outperform estimating all parameters individually and subsequently evaluating the function. This advantage can scale with both the evolution time and the number of parameters.
\end{abstract}

 \maketitle
\end{CJK}

\section{Introduction}

Quantum sensing exploits the controlled evolution of quantum systems to infer physical quantities with high precision \cite{Giovannetti2011,Degen2017}. A central objective of quantum metrology is to determine how the estimation error scales with the total sensing time $t$. For the mean-squared error, independent repetitions yield the standard-quantum-limit (SQL) scaling $t^{-1}$, whereas coherent phase accumulation can attain the Heisenberg-limit (HL) scaling $t^{-2}$ \cite{Giovannetti2011,Demkowicz2012}. Determining which scaling is achievable requires optimizing the available sensing strategy, including the probe state, the measurement, and, for dynamical models, the controls applied during signal acquisition.

For estimating a single parameter encoded in a Hamiltonian, the ultimate achievable precision is set by the spectral range of the generator coupled to the parameter \cite{Boixo2007,PangBrun2014}, and control can be essential when this generator is time-dependent \cite{PangJordan2017}. In the presence of Markovian noise, the asymptotic scaling of the precision with time can reduce to SQL from the HL in the noiseless case \cite{Escher2011,Demkowicz2012}. With sufficiently fast control and noiseless ancillas, however, quantum error correction (QEC) can restore HL scaling when the signal generator is distinguishable from the noise, as formalized by the Hamiltonian-not-in-Lindblad-span (HNLS) condition \cite{Arrad2014,Dur2014,Kessler2014,Demkowicz2017,Zhou2018,ZhouJiang2021Asymptotic, layden_spatial_2018,WanLasenby2022}.

The problem becomes substantially more difficult when attempting to simultaneously estimate multiple parameters, which are encoded through distinct, possibly noncommuting, Hamiltonian generators. Optimal strategies for different parameters can be incompatible, and although sensitivity bounds such as the Holevo and Nagaoka--Hayashi bounds apply to fixed quantum states \cite{Ragy2016,Holevo2011,Albarelli2019,Conlon2021,Sidhu2021, zhou2026randomized}, no general attainable bound is known after optimizing over probe states, ancillas, measurements, and controls \cite{Li2026OptimalStrategies}. 

An intermediate task between the full multiparameter and single-parameter problems is estimating a \emph{single} function of multiple parameters. Many sensing applications require only a global feature of the parameters encoded in the Hamiltonian such as a weighted average, spatial gradient, or image feature \cite{Eldredge2018, Komar2014,Taylor2008, Kaubruegger2023, Bringewatt2024, Abbasgholinejad2025, Proctor2018,Ehrenberg2023}; this task-specific viewpoint~\cite{Eldredge2018} has evolved into the recent concept of quantum computational sensing \cite{Khan2025, Sarovar2023, Allen2025}. Rather than first estimating all the parameters in the signal Hamiltonian and then evaluating the desired output in classical post-processing, a quantum computational sensor coherently processes the function of interest and avoids learning nuisance directions in the parameter space. Thus, this approach can reach a target precision using less sensing time \cite{Bringewatt2024, Zhuang2019, Banchi2020, SinananSingh2024, Allen2025, Khan2026, Sen2026}. Despite its multiparameter nature, tight bounds have been derived for sensing arbitrary functions of parameters for general, possibly noncommuting, Hamiltonians \cite{Abbasgholinejad2026}. However, the robustness of function estimation advantage to noise remains poorly understood \cite{qiao2026distributed}. 

Here, we derive precision bounds for estimating a function of multiple Hamiltonian parameters in the presence of Markovian noise, extending the theory of single-parameter noisy quantum metrology to task-specific multiparameter problems [\cref{fig:qcs}~(a)]. We give conditions that determine whether the optimal mean-squared error exhibits SQL or HL scaling in time. We establish protocols using QEC that attain the corresponding tight bounds, and characterize the roles of ancillas and intermediate controls. We then compare direct function estimation with the indirect strategy of learning the Hamiltonian parameters and subsequently evaluating the function classically. We identify regimes in which direct estimation changes the attainable time scaling from SQL to HL, as well as regimes with scaling advantage in the number of parameters. Such parameter-number advantages can persist in the presence of noise, including settings in which HL scaling in time is impossible even for the direct function estimation task. Our results therefore delineate when function estimation offers a genuine advantage over Hamiltonian learning strategies in the asymptotic time limit \cite{Huang2023, Dutkiewicz2024Control, Brahmachari2026Static}.

\begin{figure}[htpb]
	\centering
	\includegraphics[width=\linewidth]{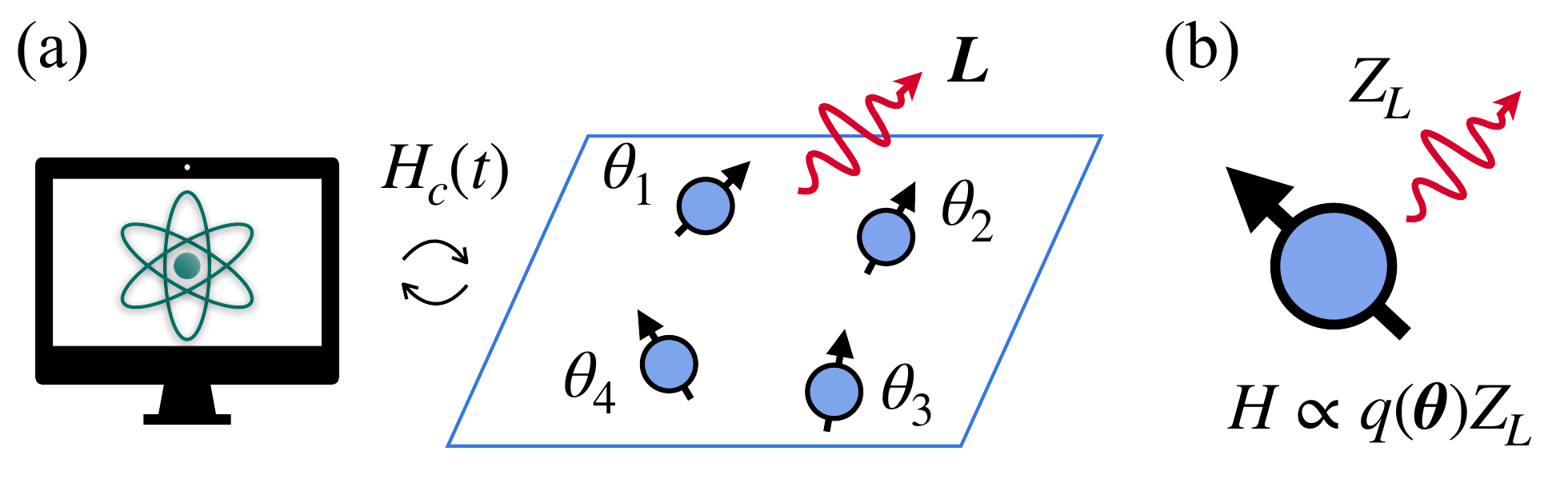}
	\caption{(a) Schematic of multiparameter function estimation with Markovian noise. A multi-qubit sensor with unknown Hamiltonian parameters $\bs\theta$ and jump operators $\bs L$ is coupled through a parameter-independent control Hamiltonian $H_c(t)$ to a noiseless ancillary quantum processor. The objective is to estimate a function $q(\bs\theta)$ without reconstructing the full parameter vector. (b) Optimized control and error correction reduce the dynamics to an effective single-qubit, single-parameter sensing problem with logical Hamiltonian $\propto q(\bs\theta)Z_L$ and, in general, logical dephasing noise. The functional HNLS condition (see \cref{thm:HNLS}) determines the absence or presence of this logical noise or, equivalently, whether sensing $q$ exhibits HL or SQL time scaling.
}
    \label{fig:qcs}
\end{figure}

\section{Model and lower bounds}

We consider a system evolving according to the Lindblad master equation
\begin{equation}\label{eq:master}
\frac{d\rho}{ds}
=-i[H(s),\rho]+\sum_a\mathcal{D}_{L_a}(\rho),
\end{equation}
where $s\in[0,t]$, $\bs L = (L_1,\cdots,L_r)$ are the jump operators describing various Markovian noise processes, and $\mathcal{D}_{L_a}(\cdot)=L_a(\cdot)L_a^\dagger-\{L_a^\dagger L_a/2, \cdot\}$. The Hamiltonian is
\begin{equation}
H(s)=H_0+H_c(s),
\qquad
H_0=\sum_{j=1}^m\theta_j g_j,
\end{equation}
where $\bs\theta\in\mathbb{R}^m$ is the vector of unknown parameters, $\bs g=(g_1,\ldots,g_m)$ is the vector of known Hermitian generators, and $H_c(s)$ is a parameter-independent control Hamiltonian acting on the system and arbitrary noiseless ancillas. Our goal is to estimate $q=\bs\alpha\cdot\bs\theta$ for a given $\bs\alpha\in\mathbb{R}^m$ and total sensing time $t$, with the variance of a locally unbiased estimator $q_\text{est}$ as the figure of merit.

We obtain lower bounds by reducing the multiparameter problem to a family of single-parameter problems \cite{Proctor2018,Eldredge2018,Abbasgholinejad2026}. Consider an invertible linear transformation from $\bs\theta$ to a new parameter vector $\bs q$, with $q_j=\bs\alpha^{(j)}\cdot\bs\theta$, $q_1=q$, and $\bs\alpha^{(1)}=\bs\alpha$. Let $\{\bs\beta^{(j)}\}_{j=1}^m$ denote the dual basis, satisfying $\bs\beta^{(j)}\cdot\bs\alpha^{(i)}=\delta_{ij}$, and define $\bs\beta\equiv\bs\beta^{(1)}$. The signal Hamiltonian can then be written as
\begin{equation} \label{eq:H_q}
H_0
=qG(\bs\beta)
+\sum_{j\geq 2}^m q_j
G(\bs\beta^{(j)}),
\end{equation}
where we define the generator map $G(\bs \beta') = \bs \beta'\cdot \bs g$. Thus, $q$ couples to an effective generator $ G(\bs\beta)$, with $\bs\alpha\cdot\bs\beta=1$.

For any such reparameterization, suppose that the nuisance parameters $\{q_j\}_{j=2}^m$ are known. Their contributions to \cref{eq:H_q} are then known and may be absorbed into the control Hamiltonian, leaving a single-parameter problem in which $q$ couples to $G(\bs\beta)$. Providing this side information cannot reduce the attainable sensitivity. Therefore, for every $\bs\beta$, the single-parameter bound on the quantum Fisher information of $q$ coupled to $G(\bs\beta)$ also bounds the information obtainable about $q$ in the original multiparameter problem. Thus, for the quantum Fisher information (QFI) of $q$ for any protocol, we obtain
\begin{equation}\label{eq:F_q}
    F_q \leq F_{q,\text{opt}}:= \min_{\bs \beta |\bs\alpha\cdot\bs \beta =1} 
    \begin{pmatrix}
        \text{single-parameter bound}\\
         \text{with generator } G(\bs\beta)
    \end{pmatrix},
\end{equation}
or equivalently, a lower bound on $\mathrm{var}(q_{\text{est}})$, through the quantum Cram\'er-Rao bound \cite{BraunsteinCaves1994, Helstrom1976}
\begin{equation}
    \mathrm{var}(q_{\text{est}}) \geq \frac{1}{F_{q,\text{opt}}}.
\end{equation} 
We emphasize that it is not immediate whether the bound in \cref{eq:F_q} is tight or not. In the absence of noise, however, tightness has recently been established for general Hamiltonian generators \cite{Abbasgholinejad2026}. In this noiseless setting, the single-parameter bound of Ref.~\cite{Boixo2007} and \cref{eq:F_q} gives
\begin{equation}\label{eq:F_q_ideal}
F_q
\leq
\min_{\bs\beta,\bs\alpha\cdot\bs\beta=1}
t^2\| G(\bs\beta)\|_s^2,
\end{equation}
where $\|\cdot\|_s$ denotes the spectral range. Ref.~\cite{Abbasgholinejad2026} constructs a protocol attaining this bound, thereby proving its tightness in the noiseless setting. We will use different choices of single-parameter bounds in \cref{eq:F_q} to treat different noise settings \cite{Demkowicz2017,WanLasenby2022, Zhou2018}.

An important feature of the bound in \cref{eq:F_q} is that the constraint $\bs\alpha\cdot\bs\beta=1$ is linear. Consequently, whenever the relevant single-parameter bound admits a semidefinite-program (SDP) representation, the additional optimization over $\bs\beta$ can potentially be incorporated into the same SDP. This structure will allow us to evaluate the noisy multiparameter bounds and, through SDP duality, construct protocols that attain them. 

\section{Error-corrected function estimation in the absence of noise}

Ref.~\cite{Abbasgholinejad2026} introduced a protocol that saturates the noiseless bound in \cref{eq:F_q_ideal} using fast coherent control during the evolution, without coupling to ancillas. Here, we present an alternative optimal protocol based on QEC for metrology \cite{Zhou2018, Demkowicz2017}. Besides providing a different perspective on the noiseless problem, this construction naturally extends to the Markovian setting considered later. In general, our protocol requires noiseless ancillas but replaces fast time-dependent control with a static control Hamiltonian.

The central idea is to treat the nuisance generators $G(\bs\beta^{(j)})$ in \cref{eq:H_q} as correctable errors. Specifically, we seek a code space $C$, with projector $\Pi_C$, satisfying the Knill--Laflamme conditions \cite{KnillLaflamme1997}
\begin{subequations}
\begin{equation}\label{eq:KL}
\Pi_C G(\bs\beta^{(j)})\Pi_C
\propto \Pi_C,
\qquad j\geq2,
\end{equation}
while preserving the signal generator,
\begin{equation}\label{eq:KL_signal}
\Pi_C G(\bs\beta)\Pi_C
\not\propto \Pi_C.
\end{equation}
\end{subequations}
\cref{eq:KL} ensures that the nuisance generators act trivially within the code space, while \cref{eq:KL_signal} guarantees that the logical dynamics remain sensitive to $q$.

The optimal code can be constructed from the dual of the minimization problem in \cref{eq:F_q_ideal}. As detailed in \cref{app:dual_HL_bound}, the dual problem is
\begin{align}\label{eq:F_opt_dual}
    F_{q,\text{opt}} =  \max_{\xi, \rho_0, \rho_1}\quad & t^2\xi^2\\
    \text{s.t.}\quad &
    \mathrm{Tr}\!\left[(\rho_0-\rho_1) g_j\right] = \xi \alpha_j,\quad \forall j,
    \nonumber\\
    & \rho_0,\rho_1\succeq 0,\qquad \mathrm{Tr}( \rho_0)=\mathrm{Tr}( \rho_1)=1. \nonumber
\end{align}
Let $\xi^*$, $\rho_0^*$, and $\rho_1^*$ denote an optimal solution. We choose the logical codewords $\ket{0_L}$ and $\ket{1_L}$ to be purifications of $\rho_0^*$ and $\rho_1^*$ using some ancilla system. These purifications can be chosen to have support on orthogonal ancillary subspaces. Then, it is easy to check that $\Pi_C = \ketbra{0_L}{0_L} + \ketbra{1_L}{1_L}$ satisfies the conditions in \cref{eq:KL,eq:KL_signal} by using the dual constraints in \cref{eq:F_opt_dual} (also detailed in \cref{app:dual_HL_bound}). Therefore, the effective Hamiltonian after projecting into the code space is 
\begin{equation}\label{eq:H_eff}
    H_\mathrm{eff}= \Pi_CH_0\Pi_C = \mu\Pi_C+\frac{\xi^* q}{2} Z_L,
\end{equation}
where $\mu\in\mathbb{R}$, and $Z_L = \ketbra{0_L}{0_L}-\ketbra{1_L}{1_L}$ is the logical Pauli-$Z$ operator. The original multiparameter problem is thus reduced to sensing $q$ with a single logical qubit [\cref{fig:qcs}~(b)]. Preparing the logical state $\ket{+_L} = (\ket{0_L}+\ket{1_L})/\sqrt{2}$ and repeatedly projecting the evolved state back into $C$ produces a logical phase generated by \cref{eq:H_eff}. In the limit of sufficiently frequent projections, the resulting quantum Fisher information is $t^2\xi^{*2}$, saturating the optimum in \cref{eq:F_opt_dual}.

Because the errors are coherent nuisance terms in the original Hamiltonian, rather than Markovian noise, the repeated projections can be replaced by a static energy penalty. In particular, adding
\begin{equation}
H_c=B\Pi_C,
\end{equation}
creates an energy gap of magnitude $B$ between the code space and its orthogonal complement. For sufficiently large $B$, leakage out of $C$ is suppressed and the dynamics are confined to the effective Hamiltonian in \cref{eq:H_eff}. In \cref{app:energy_penalty}, we show that choosing $B\gg t\|H_0\|^2$ ensures that the bias and error in the variance of $q_\text{est}$ are negligible.

We emphasize that compared with the protocol of Ref.~\cite{Abbasgholinejad2026}, this protocol eliminates the need for fast control during the sensing evolution, requiring only a static energy penalty. The tradeoff is that noiseless ancillas are generally needed to realize the logical codewords.

\section{HL function estimation in the presence of noise}

We now consider function estimation in the presence of Markovian noise. To apply the single-parameter bounds of Refs.~\cite{Zhou2018,Demkowicz2017} within the reduction of \cref{eq:F_q}, we first define the Lindblad span associated with the jump operators in \cref{eq:master}:
\begin{equation}
\mathcal S_L
=
\mathrm{span}_{\mathbb{R}}
\big\{
 I,
 L_a^{\rm H},
i L_a^{\rm AH},
( L_a^\dagger  L_b)^{\rm H},
i( L_a^\dagger  L_b)^{\rm AH}
\big\}_{a,b},
\label{eq:lindblad_span}
\end{equation}
where $ O^{\rm H}=( O+ O^\dagger)/2$ and $ O^{\rm AH}=( O- O^\dagger)/2$. The Lindblad span $\mathcal S_L$ is therefore a real subspace of the Hermitian operators.

Let $\bs g_\perp$ denote the projection of the generator vector $\bs g$ onto the orthogonal complement of $\mathcal S_L$. Thus, $g_{j,\perp}$ is the component of $g_j$ outside the Lindblad span. We define the corresponding generator map as $G_\perp(\bs\beta')=\bs\beta'\cdot\bs g_\perp$. The following theorem characterizes when the desired function retains HL scaling:

\begin{theorem}[Functional HNLS]\label{thm:HNLS}
The function $q=\bs\alpha\cdot \bs \theta$ admits HL scaling under \cref{eq:master} if and only if
\begin{equation}
\mathrm{ker}(G_\perp) \subseteq \mathrm{ker}(\bs\alpha^T),
\end{equation}
or equivalently,
\begin{equation}
\quad G_\perp(\bs v)=0 \implies \bs\alpha\cdot\bs v =0.
\end{equation}
When this condition holds, the optimal sensitivity for $q$ is attainable using an exact QEC that corrects the Markovian noise and every nuisance Hamiltonian direction.
\end{theorem}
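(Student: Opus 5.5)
The plan is to prove the two directions separately. For necessity we use the reduction of \cref{eq:F_q} together with the single-parameter HNLS theorem of Refs.~\cite{Zhou2018,Demkowicz2017}. For sufficiency we build an explicit code from the dual of an SDP, in the same way the noiseless code of \cref{eq:F_opt_dual} was built.

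\emph{Necessity.} Suppose the condition fails. Then there is a $\bs v$ with $G_\perp(\bs v)=0$ and $\bs\alpha\cdot\bs v\neq0$. Rescale it to $\bs\beta=\bs v/(\bs\alpha\cdot\bs v)$, so that $\bs\alpha\cdot\bs\beta=1$ and $G(\bs\beta)=\bs\beta\cdot\bs g\in\mathcal S_L$. Now complete $\bs\beta$ to a dual basis $\{\bs\beta^{(j)}\}$ of some reparametrization with $q_1=q$; this is always possible because $\bs\alpha\cdot\bs\beta=1$. In the single-parameter problem where $q$ couples to $G(\bs\beta)$, HNLS is violated. The known single-parameter result (the SDP bound of Refs.~\cite{Zhou2018,Demkowicz2017}, valid for arbitrary ancillas and controls) therefore gives $F\le c\,t$ with a finite constant $c$. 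By \cref{eq:F_q}, $F_q\le c\,t$, which rules out HL scaling.

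\emph{Sufficiency.} Suppose $\ker G_\perp\subseteq\ker\bs\alpha^T$. Then for every $\bs\beta$ with $\bs\alpha\cdot\bs\beta=1$ we have $G(\bs\beta)\notin\mathcal S_L$, so each single-parameter problem is HNLS-satisfying with a bound of order $t^2$. This alone does not give achievability, so we need a code. Following the structure of \cref{eq:F_opt_dual}, we look for $\rho_0,\rho_1$ and $\xi>0$ satisfying three conditions:
\begin{itemize}
\item $\mathrm{Tr}[(\rho_0-\rho_1)g_j]=\xi\alpha_j$ for all $j$;
\item $\mathrm{Tr}[(\rho_0-\rho_1)S]=0$ for all $S\in\mathcal S_L$;
\item positivity and unit trace of $\rho_0,\rho_1$.
\end{itemize}
This is the dual of the primal $\min_{\bs\beta,\,S\in\mathcal S_L}\|G(\bs\beta)-S\|_s$ subject to $\bs\alpha\cdot\bs\beta=1$. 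Note that $G(\bs\beta)-S$ ranges over the coset of $\mathcal S_L$; since $I\in\mathcal S_L$, the spectral-range seminorm vanishes exactly on $\mathcal S_L$.

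The primal value is strictly positive precisely under our condition. Indeed, $\min\|G(\bs\beta)-S\|_s=0$ would require $G_\perp(\bs\beta)=0$ with $\bs\alpha\cdot\bs\beta=1$. Compactness of the infimum holds after quotienting by $\ker G_\perp$, which lies in $\ker\bs\alpha^T$, so the minimum is attained. Strong duality then gives a feasible triple with $\xi^*>0$. We purify $\rho_0^*,\rho_1^*$ on orthogonal ancilla supports to get $\ket{0_L},\ket{1_L}$. The second dual constraint then yields the Knill--Laflamme conditions $\Pi_C S\Pi_C\propto\Pi_C$ for every element of $\mathcal S_L$, so the noise is correctable. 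The first constraint yields \cref{eq:KL,eq:KL_signal} for the nuisance directions, with logical Hamiltonian $\tfrac{\xi^* q}{2}Z_L$ plus terms proportional to $\Pi_C$. We then invoke the standard fast-QEC argument of Ref.~\cite{Zhou2018}: in the limit of frequent recovery the logical dynamics are noiseless, and $F_q=t^2\xi^{*2}$.

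\emph{Optimality of the code and main obstacle.} It remains to show that $\xi^{*2}t^2$ equals the best achievable HL coefficient. For the upper bound, take for fixed $\bs\beta$ the single-parameter HNLS bound, which is $4t^2\min_{S\in\mathcal S_L}\|G(\bs\beta)-S\|^2$ in operator norm, equivalently $t^2\min_S\|G(\bs\beta)-S\|_s^2$. Substituting this into \cref{eq:F_q} matches our primal. The main obstacle is setting up the SDP and its duality correctly: showing the joint minimization over $(\bs\beta,S)$ is dual to the constrained $(\rho_0,\rho_1,\xi)$ problem, with the extra orthogonality to $\mathcal S_L$ coming from the Lagrange multipliers of the free $S$, and verifying attainment and zero duality gap. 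The argument follows the noiseless derivation in \cref{app:dual_HL_bound}, with the generator set augmented by a basis of $\mathcal S_L$ that carries zero weight in $\bs\alpha$.
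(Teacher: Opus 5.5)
Your proposal is correct and follows essentially the same route as the paper. For necessity, you rescale a vector of $\ker(G_\perp)$ with $\bs\alpha\cdot\bs v\neq0$ into an admissible $\bs\beta$ with $G(\bs\beta)\in\mathcal S_L$, then invoke the single-parameter bound through \cref{eq:F_q}; for sufficiency and optimality, you use the dual SDP whose optimal $\rho_0^*,\rho_1^*$, purified on orthogonal ancilla (flag) supports, give a code satisfying the Knill--Laflamme conditions for $\mathcal S_L$, with nuisance directions acting as scalars and $q$ coupling to $\tfrac{\xi^*}{2}Z_L$, and your explicit attainment argument (minimizing a norm of $G_\perp(\bs\beta)$ over an affine set) together with the appeal to strong duality fills in steps the paper leaves implicit.
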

\begin{proof}
To derive a bound and prove the theorem, we apply the single-parameter results of Refs.~\cite{Zhou2018,Demkowicz2017,ZhouJiang2021Asymptotic} to each effective generator $G(\bs\beta)$ appearing in \cref{eq:F_q}. This gives
\begin{equation}\label{eq:F_q_HL}
\lim_{t\to\infty} \frac{F_q}{t^2} \leq \min_{\bs \beta |\bs\alpha\cdot\bs \beta =1}  \min_{S\in\mathcal{S}_L} \|G(\bs\beta)-S\|_s^2.
\end{equation}
The right-hand side vanishes if and only if there exists an admissible $\bs\beta$ such that $G_\perp(\bs\beta)=0$. If the functional HNLS condition is violated, there exists a vector $\bs v\in\mathrm{ker}(G_\perp)$ with $\bs\alpha\cdot\bs v\neq0$. Rescaling $\bs v$ so that $\bs\alpha\cdot\bs v=1$ gives an admissible choice of $\bs\beta$ for which the bound vanishes, ruling out HL scaling. Conversely, if every vector in $\mathrm{ker}(G_\perp)$ is orthogonal to $\bs\alpha$, no admissible $\bs\beta$ lies in the kernel of $G_\perp$, and the coefficient on the right-hand side of \cref{eq:F_q_HL} is nonzero.

The optimization in \cref{eq:F_q_HL} is again an SDP. Its dual is (see \cref{app:dual_HL_bound})
\begin{align}\label{eq:F_opt_dual_markovian}
    F_{q,\text{opt}} =  \max_{\zeta,\rho_0,\rho_1}\quad
    & t^2 \zeta^2 \\
    \mathrm{s.t.}\quad&
    \mathrm{Tr}\left[
        (\rho_0-\rho_1) g_{j}
    \right]
    =
    \zeta\alpha_j,
    \qquad \forall j, \nonumber\\
    &\mathrm{Tr}\left[
        (\rho_0-\rho_1) S
    \right]
    =
    0,
    \qquad \forall S \in \mathcal{S}_L,\nonumber\\
    &\rho_0,\rho_1\succeq0,
    \qquad
    \mathrm{Tr}(\rho_0)=\mathrm{Tr}(\rho_1)=1. \nonumber
\end{align}
Let $\zeta^*$, $\rho_0^*$, and $\rho_1^*$ denote an optimal solution. As in the noiseless case, the logical codewords $\ket{0_L}$ and $\ket{1_L}$ can be chosen as purifications of $\rho_0^*$ and $\rho_1^*$ with support on an ancillary system. The first set of dual constraints ensures that every nuisance Hamiltonian direction acts trivially within the code space, while the desired function generates a logical Hamiltonian of the form in \cref{eq:H_eff}, with $\xi^*$ replaced by $\zeta^*$.

The additional constraint involving $\mathcal S_L$ ensures that the code also satisfies
\begin{subequations}
\begin{align}
\Pi_C L_a \Pi_C &\propto \Pi_C, \quad \forall a,\\
\Pi_C L_a^\dagger L_b \Pi_C &\propto \Pi_C, \quad \forall a,b
\end{align}
\end{subequations}
and therefore corrects the Markovian noise when QEC is applied during the evolution. The resulting logical dynamics depend only on $q$ and attain the bound in \cref{eq:F_q_HL}. This establishes both the necessity and the sufficiency of the functional HNLS condition. Moreover, $\zeta^*\leq\xi^*$ because the feasible set of \cref{eq:F_opt_dual_markovian} is contained within that of the noiseless dual problem in \cref{eq:F_opt_dual}.
\end{proof}

For estimating a single parameter $\theta_j$ with HL scaling, \cref{thm:HNLS} requires its perpendicular generator $g_{j,\perp}$ to be (i) nonzero and (ii) linearly independent of the remaining perpendicular generators. The first requirement is the usual single-parameter HNLS condition \cite{Zhou2018}, whereas the second arises due to the other nuisance parameters. Thus, the usual HNLS condition may hold while nuisance parameters still prevent HL scaling for a single parameter $\theta_j$. 

Moreover, the functional HNLS condition is weaker than the corresponding requirement for estimating the full parameter vector. Ref.~\cite{Gorecki2020OptimalProbes} showed that all Hamiltonian parameters can exhibit HL scaling only when $\mathrm{ker}(G_\perp)=\{0\}$, or equivalently when the perpendicular generators are linearly independent. For a single function, linear dependence is allowed provided that every invisible parameter direction in $\mathrm{ker}(G_\perp)$ is orthogonal to $\bs\alpha$. Consequently, the full vector $\bs\theta$ may fail to admit HL scaling even though a desired function $q=\bs\alpha\cdot\bs\theta$ remains Heisenberg limited.

The same geometric interpretation extends to several functions. If the desired functions are specified by vectors $\bs\alpha^{(k)}$, then, for each of them to exhibit HL scaling, every invisible direction in $\mathrm{ker}(G_\perp)$ must be orthogonal to each $\bs\alpha^{(k)}$. In particular, if $G_\perp$ has rank smaller than $m$, the full parameter vector cannot be estimated with HL scaling, whereas a space of linear functions with dimension $\mathrm{rank}(G_\perp)$ can still satisfy the functional HNLS condition.

\section{Approximate error correction for SQL function estimation}

We now consider the complementary regime in which the functional HNLS condition fails. In this case, $F_q$ can grow at most linearly with the total sensing time $t$. Although exact QEC cannot restore HL scaling, approximate quantum error correction (AQEC) can still optimize the SQL prefactor. The corresponding single-parameter problem was solved in Ref.~\cite{ZhouJiang2020AQEC}; here, we extend that construction to function estimation.

Applying the single-parameter bound of Ref.~\cite{ZhouJiang2020AQEC} to \cref{eq:F_q} gives
\begin{equation}\label{eq:Fq_opt_aqec}
\lim_{t\to\infty} \frac{F_{q}}{t}
\leq \frac{1}{\Gamma^*} := 4
\min_{\bs\beta|\bs \alpha \cdot \bs \beta =1} \min_{h_0,\mathbf h,\mathfrak h|\mathcal G_h(\bs \beta)=0}
\|
\mathsf A_h
\|,
\end{equation}
where
\begin{equation}\label{eq:aqec_A_CE}
\mathsf A_h
    :=
    (
        \mathbf h\, I+\mathfrak h\,\bs{L}
    )^\dagger
    (
        \mathbf h\, I+\mathfrak h\,\bs{L}
    ),
\end{equation}
and
\begin{equation}
\mathcal G_h(\bs\beta)
:=
G(\bs\beta)
+
h_0 I
+
\mathbf h^\dagger\bs{L}
+
\bs{L}^\dagger\mathbf h
+
\bs{L}^\dagger\mathfrak h\bs{L}.
\end{equation}
$\Gamma^{*-1}$ is the solution to the minimization problem and $\|\cdot\|$ denotes the spectral norm. $h_0\in \mathbb{R}$, $\mathbf h\in \mathbb{C}^r$, and the hermitian matrix $\mathfrak h\in\mathbb{C}^{r\times r}$ parameterize the freedom in representing the effective generator within the Lindblad span. If the feasible set in \cref{eq:Fq_opt_aqec} is empty, then no admissible embedding $G(\bs\beta)$ lies entirely in the Lindblad span. This is precisely the regime in which the functional HNLS condition holds and exact QEC can recover HL scaling. Thus, this bound deals with the complementary case in which the feasible set is nonempty and the optimal asymptotic scaling is SQL.

The construction that attains \cref{eq:Fq_opt_aqec} follows the single-parameter AQEC protocol of Ref.~\cite{ZhouJiang2020AQEC}. One first derives the dual of the SDP in \cref{eq:Fq_opt_aqec} and uses its optimal variables to construct a two-dimensional approximate code. We defer the details of this construction to  \cref{app:dual_SQL_AQEC} and state here only the resulting logical dynamics. Under sufficiently frequent recovery operations, the encoded state evolves according to
\begin{equation}
\frac{d\rho_L}{ds}
\simeq
-i
\epsilon\left[
q
Z_L,
\rho_L
\right]
+
\epsilon^2
\Gamma^*
\mathcal{D}_{Z_L}(\rho_L).
\label{eq:Lind_eff_q}
\end{equation}
Thus, the desired function again couples to a logical $Z_L$, but the logical qubit now experiences dephasing noise [\cref{fig:qcs}~(b)]. For the AQEC construction, the parameter $\epsilon\ll1$, meaning that the effective signal is suppressed to order $\epsilon$, while the dephasing rate is suppressed to order $\epsilon^2$. 

For the effective dephasing model in \cref{eq:Lind_eff_q}, an appropriate spin-squeezed logical probe along with swap operations asymptotically attains the optimal Fisher information rate $1/\Gamma^*$ and therefore saturates \cref{eq:Fq_opt_aqec} \cite{ZhouJiang2020AQEC}. We note that the parameter $\epsilon$ does not affect the asymptotic Fisher information rate because it weakens the signal and noise in a compensating manner. It does, however, determine the time required to reach the asymptotic regime which grows as $O(\epsilon^{-2})$. Consequently, the AQEC bound is approached only at increasingly long sensing times as $\epsilon\rightarrow0$.

\begin{table}[t]
    \centering
    \begin{tabular}{ccc}
        \hline\hline
        \, Noise \,
        &
        \, Functional HNLS \,
        &
        \, Optimal QFI\,
        \\
        \hline
        $\times$
        &
        $\checkmark$
        &
        $F_{q,\mathrm{opt}} =t^2\xi^{*2} \quad$(\cref{eq:F_opt_dual})
        \\
        $\checkmark$
        &
        $\checkmark$
        &
        $F_{q,\mathrm{opt}}=t^2\zeta^{*2} \quad $(\cref{eq:F_opt_dual_markovian})
        \\
        $\checkmark$
        &
        $\times$
        &
        $F_{q,\mathrm{opt}}=t/\Gamma^* \quad $ (\cref{eq:Fq_opt_aqec})
        \\
        \hline\hline
    \end{tabular}
    \caption{
    Summary of the three function estimation regimes considered in this work.
    The quantities $\xi^*$, $\zeta^*$, and $\Gamma^*$ denote the optimal
    variables in the corresponding optimization problems.
    }\label{tab:function_qfi_regimes}
\end{table}

\section{Function estimation advantage}
\label{sec:function_estimation_advantage}

The three function estimation regimes considered in this work are summarized in Table~\ref{tab:function_qfi_regimes}. We now use these results to determine when directly estimating a function of several parameters outperforms first estimating the individual parameters and then classically computing the function, thereby demonstrating a quantum computational sensing advantage \cite{Khan2025, Sarovar2023, Allen2025}.

Let $F_{\theta_i,\mathrm{opt}}$ denote the optimal QFI for estimating $\theta_i$, in the presence of all remaining nuisance parameters. If we assume estimating all the parameters optimally, and combining them to estimate $q$ then the corresponding variance is $\sum_i\alpha_i^2/F_{\theta_i,\mathrm{opt}}$. By comparison, direct optimal function estimation gives a variance of $1/F_{q,\mathrm{opt}}$. We therefore define the function estimation advantage ratio by
\begin{equation}
\eta^2
:=
\left(
\sum_i
\frac{\alpha_i^2}{
F_{\theta_i,\mathrm{opt}}
}
\right)
F_{q,\mathrm{opt}}.
\label{eq:fe_advantage}
\end{equation}
Thus, $\eta$ is the ratio between the standard deviations of the individual-estimation and direct-estimation strategies. 

When the individually optimal protocols for each $\theta_j$ can be implemented simultaneously, they constitute a valid indirect protocol for estimating $q$. However, in a general multiparameter problem, the scalar optimal $F_{\theta_i,\mathrm{opt}}$ need not be simultaneously attainable \cite{Ragy2016,GoreckiDemkowicz2022}. In this case, \cref{eq:fe_advantage} should be regarded as an optimistic benchmark for individual estimation. Thus, $\eta$ is a lower bound for function estimation advantage: $\eta>1$ certifies an advantage, whereas $\eta\leq1$ does not rule out an advantage over the best jointly realizable individual-estimation strategy.

The ratio $\eta$ is independent of the sensing time $t$ only when the optimal QFI for the function and all the individual parameters have the same asymptotic time dependence. If the function satisfies the functional HNLS condition while one or more relevant individual parameters do not, direct function estimation can additionally exhibit an advantage in its time scaling. We illustrate these different scenarios using three representative examples. Here, we state only the main results and defer the detailed calculations to \cref{app:function_advantage_examples}.

\paragraph{Average.}
Consider a network of $m$ qubit sensors with Hamiltonian $\sum_i\theta_i Z_i$, as in Ref.~\cite{Eldredge2018}, where each sensor is subject to local bit-flip noise generated by $X_i$. The target is the collective sum $q=\sum_i\theta_i$, or equivalently the average up to an overall normalization. The full HNLS condition holds in this example, and hence both the individual parameters and every linear function of them can retain HL scaling under QEC. Nevertheless, direct estimation of the collective function achieves an advantage $\eta=\sqrt{m}$ over the individual-estimation benchmark. Error correction therefore restores HL scaling in time while preserving the advantage in qubit number for estimating the average in the presence of local noise.

\paragraph{Difference.}
Next, consider two qubits subject to common-mode dephasing generated by $Z_1+Z_2$, with Hamiltonian $\theta_1Z_1+\theta_2Z_2$. The target function is the field difference $q=\theta_1-\theta_2$. In this case, the full HNLS condition fails because the common-mode parameter direction is contained in the Lindblad span. Consequently, neither individual parameter satisfies the functional HNLS condition. The difference $q$, however, is orthogonal to this invisible common-mode direction and therefore satisfies the functional HNLS condition. Direct estimation of $q$ retains HL scaling, whereas the individual-estimation benchmark is limited to SQL scaling, giving
$\eta=\sqrt{8\gamma t}$. This example demonstrates a time-scaling advantage in function estimation even in a system containing only two qubits.

\paragraph{Fourier mode.}
Finally, consider the Hamiltonian $\sum_j\theta_jX_j$ with neighbor-correlated decay generated by
$L_j=\sqrt{\gamma}(\sigma_j^--\sigma_{j+1}^-)$ with $\sigma^-_j = \ketbra{0_j}{1_j}$. The target is the cosine Fourier component
$q=\sum_{j=0}^{m-1}\cos(2\pi j/m)\theta_j$. Here, the functional HNLS condition fails, so neither the direct nor the individual-estimation strategy can attain HL scaling. Nevertheless, AQEC can optimize their SQL prefactors and yields a function estimation advantage
$\eta=m/(\sqrt{2}\pi)$ for $m\gg1$. Thus, a linear advantage in the number of sensors can arise even when both strategies remain SQL limited.

Table~\ref{tab:function_advantage_examples} summarizes the three examples. We emphasize that all noise strengths and jump coefficients remain bounded as the number of sensors increases which means their advantages do not arise from increasing the microscopic signal or noise strengths with $m$. These examples exhibit three distinct enhancements: a HL-prefactor advantage proportional to $\sqrt{m}$, a time-scaling advantage proportional to $\sqrt{t}$, and an SQL-prefactor advantage proportional to $m$. We note that, when the individual parameters are SQL limited, the largest possible time-scaling advantage is at most $\eta=O(\sqrt{t})$.

\begin{table*}[t]
\centering
{
\begin{tabular}{ccccc}
    \hline\hline
    Hamiltonian
    &
    Lindblad jumps
    &
    Target function
    &
    Functional HNLS
    &
    Advantage
    \\
    \hline
    $\displaystyle\sum_{i=1}^m\theta_iZ_i$
    &
    $\displaystyle L_i=\sqrt{\gamma_i}X_i$
    &
    $\displaystyle q=\sum_{i=1}^m\theta_i$
    &
    $\checkmark$
    &
    $\displaystyle \eta=\sqrt m$
    \\
    $\displaystyle \theta_1Z_1+\theta_2Z_2$
    &
    $\displaystyle L=\sqrt{\gamma}(Z_1+Z_2)$
    &
    $\displaystyle q=\theta_1-\theta_2$
    &
    $\checkmark$ for $q$, $\times$ for $\theta_i$
    &
    $\displaystyle \eta=\sqrt{8\gamma t}$
    \\
    $\displaystyle \sum_{j=0}^{m-1}\theta_jX_j$
    &
    $\displaystyle L_j=\sqrt{\gamma}
    \left(\sigma_j^--\sigma_{j+1}^-\right)$
    &
    $\displaystyle q=
    \sum_{j=0}^{m-1}
    \cos\left(\frac{2\pi j}{m}\right)\theta_j$
    &
    $\times$
    &
    $\displaystyle \eta=\frac{m}{\sqrt{2}\pi}$
    \\
    \hline\hline
\end{tabular}
}
\caption{
Examples of function estimation advantage. The examples respectively demonstrate an HL-prefactor advantage scaling as $\sqrt{m}$, a time-scaling advantage arising when only the target function satisfies functional HNLS, and an SQL-prefactor advantage scaling linearly with $m$ for $m\gg1$.
}
\label{tab:function_advantage_examples}

\end{table*}

\section{Outlook}

In this work, we derived tight bounds for estimating a linear function of Hamiltonian parameters in the presence of Markovian noise. We identified a necessary and sufficient functional HNLS condition for HL scaling, and constructed protocols that attain the optimal bounds. The examples in Sec.~\ref{sec:function_estimation_advantage} demonstrate quantum computational sensing advantages: directly estimating a target function can substantially outperform first learning all the Hamiltonian parameters and then classically evaluating the function.

We note that more general nonlinear target functions $q(\bs\theta)$ can be treated using the two-step adaptive protocols of Refs.~\cite{Abbasgholinejad2026,Qian2021} when the full multiparameter HNLS condition holds \cite{Gorecki2020OptimalProbes}. A coarse estimate can be obtained using a negligible fraction of the sensing time and then used to linearize the problem, mapping it to the setting solved here. Whether this approach extends to cases in which the individual parameters are SQL-limited remains open.

A central practical challenge is to remove the requirement for noiseless ancillas and fast control \cite{Rojkov2022Bias}. Ancilla-free codes have been constructed when the signal and noise commute \cite{Layden2019AncillaFree}, and more general multi-probe constructions can asymptotically attain optimal metrological coefficients with no ancillas or a negligible ancillary overhead \cite{Zhou2024AncillaFree}. These results, however, do not establish full fault tolerance. Recent work has begun to address this problem. Refs.~\cite{Sahu2026FaultTolerant, Conlon2026FaultTolerant} established fault-tolerant protocols assuming biased noise. Developing tight bounds under generic noise remains an important open question.

Another important extension is the simultaneous estimation of multiple functions. Deriving tight and attainable bounds for an arbitrary collection of functions would bridge the gap between single function estimation and full Hamiltonian learning. Such a theory could also clarify the ultimate asymptotic limits of Hamiltonian learning algorithms for estimating all the encoded parameters.

\begin{acknowledgments}
We thank Sisi Zhou for useful discussion. This project used ChatGPT for coming up with proof ideas and methods, as well as for general checking and proofreading. E.A., and A.V.G.~were supported in part by the ONR MURI, AFOSR MURI,  DoE ASCR Quantum Testbed Pathfinder program (award No.~DE-SC0024220), NSF QLCI (award No.~OMA-2120757), NSF STAQ program, ARL (W911NF-24-2-0107), and NQVL:QSTD:Design:FTL. E.A., and A.V.G.~also acknowledge support from the U.S.~Department of Energy, Office of Science, National Quantum Information Science Research Centers, Quantum Systems Accelerator (award No.~DE-SCL0000121) and from the U.S.~Department of Energy, Office of Science, Accelerated Research in Quantum Computing, Fundamental Algorithmic Research toward Quantum Utility (FAR-Qu). S.R.M. is supported by the NSF QLCI award OMA-2120757. Y.-X.W.~acknowledges support from a QuICS Hartree Postdoctoral Fellowship. J.B. notes that the views expressed in this work are those of the author and do not reflect the official policy of the United States Naval Academy or any department of the United States government.
\end{acknowledgments}

\bibliography{ref}

\newpage
\onecolumngrid
\appendix
\crefalias{section}{appendix}
\clearpage

\section{Dual formulation of the HL multiparameter bound}
\label{app:dual_HL_bound}

In this appendix, we derive the dual formulation of the noisy HL bound in
\cref{eq:F_q_HL}. The noiseless result follows as the special case
$\mathcal S_L=\mathrm{span}_{\mathbb R}\{I\}$.

The single-parameter bound of Ref.~\cite{Zhou2018} and \cref{eq:F_q} gives
\begin{equation}
    \lim_{t\rightarrow\infty}
    \frac{F_q}{t^2}
    \leq
    (\zeta^*)^2.
\label{eq:app_HL_bound_xi}
\end{equation}
with
\begin{equation}
    \zeta^*
    :=
    \min_{\bs\beta:\,\bs\alpha\cdot\bs\beta=1}
    \min_{S\in\mathcal S_L}
    \|G(\bs\beta)-S\|_s.
\label{eq:app_xi_primal_definition}
\end{equation}
We note that the bound in Ref.~\cite{Zhou2018} uses the operator norm instead of seminorm. Since $I\in\mathcal S_L$, the two forms are equivalent. Indeed, for any Hermitian $O$,
\begin{equation}
    2\min_{c\in\mathbb R}\|O-cI\|
    =
    \|O\|_s.
\end{equation}

To express \cref{eq:app_xi_primal_definition} as an SDP, introduce two
real variables $u$ and $v$. The identity
\begin{equation}
    \|A\|_s
    =
    \min_{u,v}
    \left\{
        u-v:
        vI\preceq A\preceq uI
    \right\}
\end{equation}
gives
\begin{align}
    \zeta^*
    =
    \min_{\bs\beta,S,u,v}\quad
    &u-v
\label{eq:app_HL_primal_SDP}\\
    \mathrm{s.t.}\quad
    &G(\bs\beta)-S\preceq uI,
    \nonumber\\
    &G(\bs\beta)-S\succeq vI,
    \nonumber\\
    &\bs\alpha\cdot\bs\beta=1,
    \qquad
    S\in\mathcal S_L.
    \nonumber
\end{align}

Let $\rho_0,\rho_1\succeq0$ be the dual variables associated with the
upper and lower operator inequalities, respectively, and let
$\zeta\in\mathbb R$ be the Lagrange multiplier associated with
$\bs\alpha\cdot\bs\beta=1$. Thus, the Lagrangian is
\begin{align}
    \mathcal L
    ={}&
    u-v
    +
    \mathrm{Tr}\!\left[
        \rho_0
        \left(
            G(\bs\beta)-S-uI
        \right)
    \right]
    +
    \mathrm{Tr}\!\left[
        \rho_1
        \left(
            vI-G(\bs\beta)+S
        \right)
    \right]
    +
    \zeta
    \left(
        1-\bs\alpha\cdot\bs\beta
    \right).
\label{eq:app_HL_Lagrangian}
\end{align}
The infimum over $u$ and $v$ is finite only if
\begin{equation}
    \mathrm{Tr}(\rho_0)
    =
    \mathrm{Tr}(\rho_1)
    =
    1.
\end{equation}
The infimum over $S\in\mathcal S_L$ is finite only if
\begin{equation}
    \mathrm{Tr}((\rho_0-\rho_1)S)=0,
    \qquad
    \forall S\in\mathcal S_L,
\label{eq:app_dual_lindblad_orthogonality}
\end{equation}
whereas the infimum over each component of $\bs\beta$ is finite only if
\begin{equation}
    \mathrm{Tr}((\rho_0-\rho_1)g_j)-\zeta\alpha_j=0,
    \qquad
    j=1,\ldots,m.
\end{equation}
Thus, the dual SDP is
\begin{align}
    \zeta^*
    =
    \max_{\zeta,\rho_0,\rho_1}\quad
    &\zeta
\label{eq:app_HL_dual_SDP}\\
    \mathrm{s.t.}\quad
    &\mathrm{Tr}\!\left[
        (\rho_0-\rho_1)g_j
    \right]
    =
    \zeta\alpha_j,
    \qquad
    j=1,\ldots,m,
    \nonumber\\
    &\mathrm{Tr}\!\left[
        (\rho_0-\rho_1)S
    \right]
    =
    0,
    \qquad
    \forall S\in\mathcal S_L,
    \nonumber\\
    &\rho_0,\rho_1\succeq0,
    \qquad
    \mathrm{Tr}(\rho_0)
    =
    \mathrm{Tr}(\rho_1)
    =
    1.
    \nonumber
\end{align}
Consequently,
\begin{equation}
    F_{q,\mathrm{opt}}
    =
    t^2\zeta^{*2}
\label{eq:app_HL_coefficient_dual}
\end{equation}
whenever functional HNLS holds. If functional HNLS fails, then
$\zeta^*=0$.

We next construct a code attaining
\cref{eq:app_HL_coefficient_dual}. Let $\rho_0^*$ and $\rho_1^*$ be
optimal solutions of \cref{eq:app_HL_dual_SDP}, and let
$\ket{\psi_0}$ and $\ket{\psi_1}$ be arbitrary purifications of these
states on the system and a noiseless ancilla. Introduce an additional
noiseless flag qubit $F$ and define
\begin{equation}
    \ket{0_L}
    :=
    \ket{\psi_0}\ket{0}_F,
    \qquad
    \ket{1_L}
    :=
    \ket{\psi_1}\ket{1}_F.
\label{eq:app_HL_codewords}
\end{equation}
The orthogonal flag states ensure that, for every system operator $O$,
\begin{equation}
    \bra{a_L}O\ket{b_L}
    =
    \delta_{ab}\mathrm{Tr}(\rho_a^*O).
\label{eq:app_flag_matrix_elements}
\end{equation}
Consequently, with
\begin{equation}
    \Pi_C
    =
    \ketbra{0_L}{0_L}
    +
    \ketbra{1_L}{1_L},
    \qquad
    Z_L
    =
    \ketbra{0_L}{0_L}
    -
    \ketbra{1_L}{1_L},
\end{equation}
we have
\begin{equation}
    \Pi_C O\Pi_C
    =
    \frac{\mathrm{Tr}[(\rho_0^*+\rho_1^*)O]}{2}\Pi_C
    +
    \frac{\mathrm{Tr}[(\rho_0^*-\rho_1^*)O]}{2}Z_L.
\label{eq:app_general_projected_operator}
\end{equation}

Applying \cref{eq:app_general_projected_operator} to $g_j$ and using the
first set of dual constraints gives
\begin{equation}
    \Pi_C g_j\Pi_C
    =
    \mu_j\Pi_C
    +
    \frac{\zeta^*\alpha_j}{2}Z_L,
\label{eq:app_projected_generators_dual}
\end{equation}
where
\begin{equation}
    \mu_j
    :=
    \frac{\mathrm{Tr}[(\rho_0^*+\rho_1^*)g_j]}{2}.
\end{equation}
More generally,
\begin{equation}
    \Pi_C G(\bs v)\Pi_C
    =
    \mu(\bs v)\Pi_C
    +
    \frac{\zeta^*}{2}
    (\bs\alpha\cdot\bs v)Z_L,
\label{eq:app_projected_generator_direction}
\end{equation}
where $\mu(\bs v) = \sum_j \mu_j v_j$. Every nuisance direction obeys $\bs\alpha\cdot\bs v=0$ and therefore
acts as a scalar within the code. The full signal Hamiltonian becomes
\begin{equation}
    \Pi_C H_0\Pi_C
    =
    \mu(\bs\theta)\Pi_C
    +
    \frac{\zeta^*q}{2}Z_L.
\label{eq:app_HL_logical_Hamiltonian}
\end{equation}

The second set of dual constraints implies
\begin{equation}
    \Pi_C S\Pi_C
    \propto
    \Pi_C,
    \qquad
    \forall S\in\mathcal S_L.
\end{equation}
Since $\mathcal S_L$ contains the Hermitian and anti-Hermitian parts of
$L_a$ and $L_a^\dagger L_b$, this is equivalent to
\begin{equation}
    \Pi_C L_a\Pi_C
    =
    \ell_a\Pi_C,
    \qquad
    \Pi_C L_a^\dagger L_b\Pi_C
    =
    c_{ab}\Pi_C,
\label{eq:app_HL_KL_conditions}
\end{equation}
for all $a,b$. Thus, the code satisfies the Knill--Laflamme conditions
for the infinitesimal Markovian error set $E_0=I$ and $E_a=L_a$.

For completeness, over a short interval $\delta t$, the channel has
Kraus operators
\begin{equation}
    K_0
    =
    I-
    \left(
        iH_0+\frac{1}{2}\sum_aL_a^\dagger L_a
    \right)\delta t
    +
    O(\delta t^2)
\end{equation}
and
\begin{equation}
    K_a
    =
    \sqrt{\delta t}\,L_a
    +
    O(\delta t^{3/2}).
\end{equation}
By \cref{eq:app_HL_KL_conditions}, there exists a recovery map
$\mathcal R$ such that, for every state supported on $C$,
\begin{equation}
    \mathcal R\circ\mathcal E_{\delta t}(\rho_L)
    =
    \rho_L
    -
    i\delta t
    \left[
        \Pi_C H_0\Pi_C,\rho_L
    \right]
    +
    O(\delta t^2).
\end{equation}
Applying the recovery after every interval and taking
$\delta t\rightarrow0$ suppresses the noise and Hamiltonian leakage while
preserving \cref{eq:app_HL_logical_Hamiltonian}. Finally, preparing $\ket{+_L}=(\ket{0_L}+\ket{1_L})/\sqrt{2}$ and evolving under the effective Hamiltonian for time $t$, gives
\begin{equation}
    F_q
    =
    4t^2
    \mathrm{var}_{\ket{+_L}}
    \left(
        \frac{\zeta^*}{2}Z_L
    \right)
    =
    t^2(\zeta^*)^2,
\end{equation}
which attains the HL coefficient in \cref{eq:app_HL_bound_xi}.

\section{Static-field protocol error analysis}
\label{app:energy_penalty}

In this appendix, we analyze the static protocol introduced in the main text. This
protocol applies only to the noiseless setting: it suppresses coherent
leakage generated by nuisance Hamiltonian terms, but it does not correct
stochastic Lindblad errors. We assume that the optimal coefficient in \cref{eq:F_opt_dual} satisfies $\xi^*>0$.

Let
\begin{equation}
    P:=\Pi_C,
    \qquad
    Q:=I-P,
\end{equation}
and suppose the unknown parameters are bounded such that
\begin{equation}
    \|H_0(\bs\theta)\|
    \leq
    \Lambda.
\label{eq:app_H0_uniform_bound}
\end{equation}

By construction of the noiseless code,
\begin{equation}
    PH_0P
    =
    \mu(\bs\theta)P
    +
    \frac{\xi^*q}{2}Z_L,
\label{eq:app_projected_H0_static}
\end{equation}
where $\mu(\bs\theta)$ contributes only a global phase within the code.

Add the static energy penalty
\begin{equation}
    H_c=BP,
\end{equation}
with $B>0$. The total Hamiltonian has the block form
\begin{equation}
    H_B
    :=
    H_0+BP
    =
    \begin{pmatrix}
        [B+\mu(\bs\theta)]I_C
        +\dfrac{\xi^*q}{2}Z_L
        &
        PH_0Q
        \\
        QH_0P
        &
        QH_0Q
    \end{pmatrix}.
\label{eq:app_HB_block_form}
\end{equation}
The corresponding block-diagonal Hamiltonian is
\begin{equation}
    H_Z
    :=
    PH_0P+QH_0Q.
\label{eq:app_Zeno_Hamiltonian}
\end{equation}

Then, the strong-coupling bound of
Refs.~\cite{Burgarth2019StrongCoupling,Burgarth2022OneBound}, applied to
the two eigenspaces of $P$, gives
\begin{equation}
    \left\|
        e^{-is(BP+H_0)}
        -
        e^{-is(BP+H_Z)}
    \right\|
    \leq
    \varepsilon_B(s),
\label{eq:app_static_strong_coupling_bound}
\end{equation}
where, for $0\leq s\leq t$,
\begin{equation}
    \varepsilon_B(s)
    :=
    \frac{2\sqrt{2}\,\Lambda}{B}
    \left(
        1+2s\Lambda
    \right).
\label{eq:app_static_epsilon}
\end{equation}
Thus the exact evolution converges in operator norm to the block-diagonal
evolution as $B\rightarrow\infty$.

For the logical input
$\ket{+_L}=(\ket{0_L}+\ket{1_L})/\sqrt{2}$, define
\begin{equation}
    \ket{\psi_B(t)}
    :=
    e^{-itH_B}\ket{+_L}
\end{equation}
and
\begin{equation}
    \ket{\psi_{\mathrm{eff}}(t)}
    :=
    e^{-it\xi^*qZ_L/2}\ket{+_L}.
\end{equation}
Because $\ket{+_L}$ is supported on $C$,
\begin{equation}
    e^{-it(BP+H_Z)}\ket{+_L}
    =
    e^{-it[B+\mu(\bs\theta)]}
    \ket{\psi_{\mathrm{eff}}(t)}.
\end{equation}
It follows from \cref{eq:app_static_strong_coupling_bound} that
\begin{equation}
    \min_{\varphi\in\mathbb R}
    \left\|
        \ket{\psi_B(t)}
        -
        e^{i\varphi}\ket{\psi_{\mathrm{eff}}(t)}
    \right\|
    \leq
    \varepsilon_B(t),
\label{eq:app_static_state_error}
\end{equation}
where $\varphi$ is a global phase. Finally, following the method of Ref.~\cite{Abbasgholinejad2025}, the change in the variance and bias of $q_\text{est}$ remains negligible for $B = \Omega(t\|H_0\|^2)$ after measuring the parity operator $i\ketbra{0_L}{1_L}+\text{h.c.}$.

\section{Dual formulation of the SQL AQEC bound}
\label{app:dual_SQL_AQEC}

In this appendix, we derive the code-construction dual of the SQL bound in
\cref{eq:Fq_opt_aqec}. Throughout this appendix, we assume that the
functional HNLS condition fails and that the feasible set in
\cref{eq:Fq_opt_aqec} is nonempty. As a reminder
\begin{equation}
    \frac{1}{\Gamma^*}
    :=
    4
    \min_{\bs\beta:\,\bs\alpha\cdot\bs\beta=1}
    \min_{h_0,\mathbf h,\mathfrak h:\,
        \mathcal G_h(\bs\beta)=0}
    \|\mathsf A_h\|,
\label{eq:app_AQEC_rate_primal}
\end{equation}
where
\begin{equation}
\mathsf A_h
    :=
    (
        \mathbf h\, I+\mathfrak h\,\bs{L}
    )^\dagger
    (
        \mathbf h\, I+\mathfrak h\,\bs{L}
    ),
\end{equation}
and
\begin{equation}
\mathcal G_h(\bs\beta)
:=
G(\bs\beta)
+
h_0 I
+
\mathbf h^\dagger\bs{L}
+
\bs{L}^\dagger\mathbf h
+
\bs{L}^\dagger\mathfrak h\bs{L}.
\end{equation}
and, the bound in the main text states
\begin{equation}
    \lim_{t\rightarrow\infty}
    \frac{F_{q,\mathrm{opt}}}{t}
    \leq
    \frac{1}{\Gamma^*}.
\end{equation}
Here, $h_0\in\mathbb R$, $\mathbf h\in\mathbb C^r$, and
$\mathfrak h\in\mathbb C^{r\times r}$ is Hermitian. 

For every positive operator $A$,
\begin{equation}
    \|A\|
    =
    \max_{\rho\succeq0,\,\mathrm{Tr}(\rho)=1}
    \mathrm{Tr}(\rho A).
\end{equation}
Write $\rho=CC^\dagger$, where $C$ is a $d\times d$ matrix satisfying $\mathrm{Tr}(C^\dagger C)=1$ and $d$ is the system dimension. Using the same minimax step as in Ref.~\cite{ZhouJiang2020AQEC}, now including the extra constraint on $\bs\beta$, gives
\begin{equation}
    \frac{1}{\Gamma^*}
    =
    \max_{C:\,\mathrm{Tr}(C^\dagger C)=1}
    \;
    \min_{\substack{\bs\beta,h_0,\mathbf h,\mathfrak h\\
        \mathcal G_h(\bs\beta)=0,\;
        \bs\alpha\cdot\bs\beta=1}}
    4\mathrm{Tr}
    \left(
        C^\dagger\mathsf A_h C
    \right).
\label{eq:app_AQEC_minimax}
\end{equation}
For fixed $C$, we first perform the same gauge transformation of the jumps $\{L_a\}_{a=1}^r$ as in Ref.~\cite{ZhouJiang2020AQEC}. We subtract their $C$-expectation values and then apply a unitary to obtain operators
$\{J_a\}_{a=1}^r$ satisfying
\begin{equation}
    \mathrm{Tr}(C^\dagger J_aC)=0,
    \qquad
    \mathrm{Tr}
    \left(
        C^\dagger J_a^\dagger J_bC
    \right)
    =
    \kappa_a\delta_{ab},
\label{eq:app_AQEC_jump_gauge}
\end{equation}
where $\kappa_a\geq0$, and we define $\mathfrak n_C:=\{a:\,\kappa_a=0\}$. This transformation only induces a parameter-independent Hamiltonian shift that can be
cancelled by the control term $H_c(t)$. In the remainder of the derivation, $\mathbf L$ is replaced by $\mathbf J$ in $\mathsf A_h$ and $\mathcal{G}_h(\bs\beta)$.

In this gauge, the quadratic objective simplifies to
\begin{equation}
    \mathrm{Tr}
    \left(
        C^\dagger\mathsf A_hC
    \right)
    =
    \mathbf h^\dagger\mathbf h
    +
    \mathrm{Tr}
    \left(
        \mathrm{diag}(\bs\kappa)\,
        \mathfrak h^2
    \right),
\label{eq:app_AQEC_quadratic_objective}
\end{equation}
where $\bs\kappa=(\kappa_1,\ldots,\kappa_r)$. 

Introduce a Hermitian matrix $\widetilde C$ as the Lagrange multiplier
for the operator equality $\mathcal G_h(\bs\beta)=0$, and a real scalar
$\lambda$ for $\bs\alpha\cdot\bs\beta=1$. Thus, the Lagrangian of the inner problem in
\cref{eq:app_AQEC_minimax} is
\begin{align}
    \mathcal L
    ={}&
    4\mathrm{Tr}
    \left(
        C^\dagger\mathsf A_h C
    \right)
    +
    \mathrm{Tr}
    \left[
        \widetilde C\,
        \mathcal G_h(\bs\beta)
    \right]+
    \lambda
    \left(
        1-\bs\alpha\cdot\bs\beta
    \right).
\label{eq:app_AQEC_Lagrangian}
\end{align}
The infimum over $h_0$ requires
$\mathrm{Tr}(\widetilde C)=0$. The infimum over
$\bs\beta$ requires 
\begin{equation}
    \mathrm{Tr}(g_j\widetilde C)
    =
    \lambda\alpha_j,
    \qquad
    j=1,\ldots,m,
\label{eq:app_AQEC_response_alignment}
\end{equation}
If $\kappa_a=\kappa_b=0$, the infimum over the
corresponding unconstrained components of $\mathfrak h$ is finite only
if
\begin{equation}
    \mathrm{Tr}
    \left(
        J_a^\dagger J_b\widetilde C
    \right)
    =0,
    \qquad
    a,b\in\mathfrak n_C.
\label{eq:app_AQEC_null_constraints}
\end{equation}

Minimizing the remaining quadratic expression over $\mathbf h$ and
$\mathfrak h$ gives the residual-noise functional
\begin{align}
    \Gamma_C(\widetilde C)
    :=&
    \sum_{a=1}^r
    \left|
        \mathrm{Tr}(J_a\widetilde C)
    \right|^2+
    \sum_{a,b:\,\kappa_a+\kappa_b\neq0}
    \frac{
        \left|
            \mathrm{Tr}
            \left(
                J_a^\dagger J_b\widetilde C
            \right)
        \right|^2
    }{
        2(\kappa_a+\kappa_b)
    }.
\label{eq:app_AQEC_Gamma}
\end{align}
More explicitly, after the minimization the dual objective is
\begin{equation}
    \lambda-
    \frac{1}{4}
    \Gamma_C(\widetilde C).
\label{eq:app_AQEC_unscaled_dual_objective}
\end{equation}
Rescaling $\widetilde C$ rescales $\lambda$ linearly and
$\Gamma_C$ quadratically. Maximizing over this scale transforms
\cref{eq:app_AQEC_unscaled_dual_objective} into 
\begin{align}
    \frac{1}{\Gamma^*}
    =
    \max_{C,\widetilde C,\lambda}
    \quad&
    \frac{\lambda^2}{\Gamma_C(\widetilde C)}
\label{eq:app_AQEC_dual_quotient}\\
    \mathrm{s.t.}\quad
    &\mathrm{Tr}(C^\dagger C)=1,
    \nonumber\\
    &\widetilde C=\widetilde C^\dagger,
    \qquad
    \mathrm{Tr}(\widetilde C)=0,
    \nonumber\\
    &\mathrm{Tr}(g_j\widetilde C)
    =\lambda\alpha_j,
    \qquad
    j=1,\ldots,m,
    \nonumber\\
    &\mathrm{Tr}
    \left(
        J_a^\dagger J_b\widetilde C
    \right)
    =0,
    \qquad
    a,b\in\mathfrak n_C.
    \nonumber
\end{align}
which is invariant under a nonzero rescaling of
$\widetilde C$. By absorbing the coefficient $\lambda$ into $\tilde{C}$ for $\lambda\neq 0$, we obtain
\begin{align}
    \Gamma^*
    :=
    \min_{C,\widetilde C}
    \quad&
    \Gamma_C(\widetilde C)
\label{eq:app_AQEC_Gamma_star}\\
    \mathrm{s.t.}\quad
    &\mathrm{Tr}(C^\dagger C)=1,
    \qquad
    \widetilde C=\widetilde C^\dagger,
    \qquad
    \mathrm{Tr}(\widetilde C)=0,
    \nonumber\\
    &\mathrm{Tr}(g_j\widetilde C)=\alpha_j,
    \qquad
    j=1,\ldots,m,
    \nonumber\\
    &\mathrm{Tr}
    \left(
        J_a^\dagger J_b\widetilde C
    \right)
    =0,
    \qquad
    a,b\in\mathfrak n_C.
    \nonumber
\end{align}
We note that all these steps are the same as in Ref.~\cite{ZhouJiang2020AQEC} except for \cref{eq:app_AQEC_response_alignment} which arises from the extra freedom on the vector $\bs\beta$ for our multiparameter problem.

We now review the construction of the AQEC code in Ref.~\cite{ZhouJiang2020AQEC} using the optimal variables $(C^*,\widetilde C^*)$ solving the optimization problem in
\cref{eq:app_AQEC_Gamma_star}. Choose a matrix $D^*$ such that
\begin{equation}
    \widetilde C^*
    =
    C^*D^{*\dagger}+D^*C^{*\dagger},
    \qquad
    \mathrm{Tr}(C^{*\dagger}D^*)=0.
\label{eq:app_AQEC_D_definition}
\end{equation}
If $C^*$ is invertible, one may take
$D^{*\dagger}=C^{*-1}\widetilde C^*/2$. A singular $C^*$ may be
approximated arbitrarily closely by a full-rank matrix as detailed in 
\cite{ZhouJiang2020AQEC}.

Let $A'$ be a $d$-dimensional noiseless ancilla and let $F$ be a
noiseless flag qubit. For $\epsilon \ll 1$, define
\begin{equation}
    \mathcal N_\epsilon
    :=
    \left[
        1+\epsilon^2
        \mathrm{Tr}(D^{*\dagger}D^*)
    \right]^{-1/2}
\end{equation}
and
\begin{subequations}\label{eq:app_AQEC_codewords}
\begin{align}
    \ket{0_L}
    &:=
    \mathcal N_\epsilon
    \sum_{i,j}
    (C^*_{ij}+\epsilon D^*_{ij})
    \ket{i}_S\ket{j}_{A'}\ket{0}_F,
\label{eq:app_AQEC_codeword_zero}\\
    \ket{1_L}
    &:=
    \mathcal N_\epsilon
    \sum_{i,j}
    (C^*_{ij}-\epsilon D^*_{ij})
    \ket{i}_S\ket{j}_{A'}\ket{1}_F.
\label{eq:app_AQEC_codeword_one}
\end{align}
\end{subequations}
The orthogonal flag makes all system operators diagonal in the logical
basis, while the opposite perturbations $\pm\epsilon D^*$ create a
logical signal of order $\epsilon$.

It remains to specify the recovery. Let $P$ project onto the code in
\cref{eq:app_AQEC_codewords} and let $Q=I-P$. Since every $J_a$
acts trivially on the flag qubit, define vectors on $S\otimes A'$ by
\begin{equation}
    Q J_a\ket{\ell_L}
    =
    \ket{e_{a,\ell}}\ket{\ell}_F,
    \qquad
    \ell=0,1.
\end{equation}
Form the error-overlap operator
\begin{equation}
    M_\epsilon
    :=
    \sum_{a=1}^r
    \ket{e_{a,0}}\bra{e_{a,1}}
\label{eq:app_AQEC_recovery_overlap}
\end{equation}
and choose a singular-value decomposition
\begin{equation}
    M_\epsilon
    =
    \sum_\mu s_\mu
    \ket{R_\mu}\bra{S_\mu}.
\end{equation}
The recovery channel has Kraus operators
\begin{equation}
    K_\mu
    :=
    \ket{0_L}\bra{R_\mu,0}
    +
    \ket{1_L}\bra{S_\mu,1},
\label{eq:app_AQEC_recovery_Kraus}
\end{equation}
so that
\begin{equation}
    \mathcal R_\epsilon(X)
    =
    \sum_\mu K_\mu X K_\mu^\dagger.
\label{eq:app_AQEC_recovery_map}
\end{equation}
The complete AQEC step is the CPTP map
\begin{equation}
    \mathcal V_\epsilon(X)
    :=
    PXP
    +
    \mathcal R_\epsilon
    \left(
        Q XQ
    \right).
\label{eq:app_AQEC_complete_step}
\end{equation}

Now, using the constraint in \cref{eq:app_AQEC_Gamma_star}, and \cref{eq:app_AQEC_codewords}, we obtain
\begin{equation}
    \mathrm{Tr}(g_jZ_L)
    =
    2\epsilon\alpha_j
    +
    O(\epsilon^3),
\end{equation}
where $g_j$ acts on the sensing system and as the identity on the
ancillas. Therefore,
\begin{equation}
    \frac{1}{2}
    \mathrm{Tr}(H_0Z_L)
    =
    \epsilon q
    +
    O(\epsilon^3),
\label{eq:app_AQEC_logical_response}
\end{equation}
while every nuisance direction has vanishing logical response to this
order.

Ref.~\cite{ZhouJiang2020AQEC} proves that applying
$\mathcal V_\epsilon$ sufficiently frequently and cancelling the known,
parameter-independent logical Hamiltonian shift gives, to leading
nonvanishing order in $\epsilon$,
\begin{equation}
    \frac{d\rho_L}{ds}
    =
    -i\epsilon
    \left[
        qZ_L,\rho_L
    \right]
    +
    \epsilon^2\Gamma^*
    \left(
        Z_L\rho_LZ_L-\rho_L
    \right)
    +
    o(\epsilon^2).
\label{eq:app_AQEC_effective_dynamics}
\end{equation}
For the effective dephasing channel in
\cref{eq:app_AQEC_effective_dynamics}, the asymptotically optimal
spin-squeezed logical probe attains Fisher information rate $1/\Gamma^*$ for estimating $q$ \cite{ZhouJiang2020AQEC}.

\section{Function estimation advantage examples}
\label{app:function_advantage_examples}

In this appendix, we derive the three examples summarized in
Sec.~\ref{sec:function_estimation_advantage}.
Recall that
\begin{equation}
    \eta^2
    =
    \left(
        \sum_i
        \frac{\alpha_i^2}{F_{\theta_i,\mathrm{opt}}}
    \right)
    F_{q,\mathrm{opt}}.
\label{eq:app_advantage_definition}
\end{equation}
The optimal QFI $F_{\theta_i,\mathrm{opt}}$ treats every parameter
other than $\theta_i$ as an unknown nuisance parameter. As explained in
the main text, the optimal QFIs need not be jointly attainable, so
\cref{eq:app_advantage_definition} is a lower bound for function estimation advantage.

\subsection{Average under local bit-flip noise}
\label{app:advantage_average}

Consider
\begin{equation}
    H_0
    =
    \sum_{i=1}^m\theta_iZ_i,
    \qquad
    L_i
    =
    \sqrt{\gamma_i}X_i.
\label{eq:app_average_model}
\end{equation}
The Lindblad span is generated by $I$, $X_i$, and $X_iX_j$. The
operators $Z_i$ are orthogonal to this span and linearly independent,
so the full HNLS condition holds.

For $G(\bs\beta)=\sum_i\beta_iZ_i$, we claim that
\begin{equation}
    \min_{S\in\mathcal S_L}
    \|G(\bs\beta)-S\|_s
    =
    2\|\bs\beta\|_1.
\label{eq:app_average_spectral_range}
\end{equation}
The upper bound follows by taking $S=0$. For the reverse inequality,
choose a computational-basis string $\bs z$ such that
$\bra{\bs z}Z_i\ket{\bs z}=\mathrm{sgn}(\beta_i)$ whenever
$\beta_i\neq0$, and let $\bar{\bs z}$ denote its bitwise complement.
Every nonidentity Pauli element of $\mathcal S_L$ has zero diagonal
expectation in both states, while the identity part has the same
expectation in both. Hence,
\begin{equation}
    \bra{\bs z}
    G(\bs\beta)-S
    \ket{\bs z}
    -
    \bra{\bar{\bs z}}
    G(\bs\beta)-S
    \ket{\bar{\bs z}}
   = 2 \|\bs \beta \|_1,
\end{equation}
which lower-bounds the spectral range.

H\"older's inequality gives
\begin{equation}
    1
    =
    |\bs\alpha\cdot\bs\beta|
    \leq
    \|\bs\alpha\|_\infty
    \|\bs\beta\|_1,
\end{equation}
and equality is obtained by placing all the weight of $\bs\beta$ on an
index maximizing $|\alpha_i|$. Hence
\begin{equation}
    \lim_{t\rightarrow\infty}
    \frac{F_{q,\mathrm{opt}}}{t^2}
    =
    \frac{4}{\|\bs\alpha\|_\infty^2},
    \qquad
    \lim_{t\rightarrow\infty}
    \frac{F_{\theta_i,\mathrm{opt}}}{t^2}
    =
    4.
\label{eq:app_average_QFIs}
\end{equation}
These coefficients can be attained with system-ancilla codes
$\ket{0_{L,i}}=\ket{0}_{S_i}\ket{0}_{A_i}$ and
$\ket{1_{L,i}}=\ket{1}_{S_i}\ket{1}_{A_i}$, which correct $X_i$ while
preserving $Z_i$ as a logical $Z$ operator.

Substitution into \cref{eq:app_advantage_definition} gives the general
result
\begin{equation}
    \eta^2
    =
    \frac{\|\bs\alpha\|_2^2}{\|\bs\alpha\|_\infty^2}.
\label{eq:app_average_eta_general}
\end{equation}
For the uniform sum $q=\sum_i\theta_i$, or for the average obtained by
dividing this sum by $m$, the overall normalization cancels from
$\eta$, and
\begin{equation}
    \eta=\sqrt m.
\label{eq:app_average_eta}
\end{equation}

\subsection{Difference under collective dephasing}
\label{app:advantage_gradient}

Consider two qubits with
\begin{equation}
    H_0
    =
    \theta_1Z_1+\theta_2Z_2,
    \qquad
    L
    =
    \sqrt{\gamma}(Z_1+Z_2).
\label{eq:app_gradient_model}
\end{equation}
Here,
\begin{equation}
    \mathcal S_L
    =
    \mathrm{span}_{\mathbb R}
    \{I,Z_1+Z_2,Z_1Z_2\},
\end{equation}
and therefore
\begin{equation}
    Z_{1,\perp}
    =
    \frac{Z_1-Z_2}{2},
    \qquad
    Z_{2,\perp}
    =
    -\frac{Z_1-Z_2}{2}.
\end{equation}
It follows that
\begin{equation}
    \ker(G_\perp)
    =
    \mathrm{span}\{(1,1)\}.
\end{equation}
The full multiparameter HNLS condition fails, but the target
$q=\theta_1-\theta_2$, with $\bs\alpha=(1,-1)$, satisfies functional
HNLS because it is orthogonal to the common-mode direction.

Every admissible embedding can be written as
\begin{equation}
    \bs\beta
    =
    \left(
        c+\frac{1}{2},
        c-\frac{1}{2}
    \right).
\end{equation}
Its common-mode component can be removed using $\mathcal S_L$, leaving
$(Z_1-Z_2)/2$, whose spectral range is two.  Moreover, every $S\in\mathcal S_L$ has the same
expectation value in $\ket{01}$ and $\ket{10}$, while
\begin{equation}
    \bra{01}\frac{Z_1-Z_2}{2}\ket{01}
    -
    \bra{10}\frac{Z_1-Z_2}{2}\ket{10}
    =
    2.
\end{equation}
Therefore,
\begin{equation}
    \|
        \frac{Z_1-Z_2}{2}-S
    \|_s
    \geq 2
\end{equation}
for all $S\in\mathcal S_L$. Thus
\begin{equation}
    \lim_{t\rightarrow\infty}
    \frac{F_{q,\mathrm{opt}}}{t^2}
    =
    4.
\label{eq:app_gradient_function_QFI}
\end{equation}
The result is attained by the code
\begin{equation}
    \ket{0_L}=\ket{01},
    \qquad
    \ket{1_L}=\ket{10},
\end{equation}
on which the signal Hamiltonian is $qZ_L$ and the jump vanishes.

By contrast, consider estimation of $\theta_1$ with $\theta_2$ unknown.
The condition $\beta_1=1$, together with
$\mathcal G_h(\bs\beta)=0$, forces $\beta_2=1$ and hence
$G(\bs\beta)=Z_1+Z_2$. For the single jump in
\cref{eq:app_gradient_model}, the minimizing variables are
\begin{equation}
    h
    =
    -\frac{1}{2\sqrt\gamma},
    \qquad
    h_0=0,
    \qquad
    \mathfrak h=0.
\end{equation}
They give
\begin{equation}
    \mathcal G_h=0,
    \qquad
    \mathsf A_h
    =
    \frac{I}{4\gamma}.
\end{equation}
The Pauli components in the constraint also show that these values are
necessary at the optimum. The same calculation applies to
$\theta_2$, so
\begin{equation}
    \lim_{t\rightarrow\infty}
    \frac{F_{\theta_1,\mathrm{opt}}}{t}
    =
    \lim_{t\rightarrow\infty}
    \frac{F_{\theta_2,\mathrm{opt}}}{t}
    =
    \frac{1}{\gamma}.
\label{eq:app_gradient_individual_QFIs}
\end{equation}
Combining \cref{eq:app_gradient_function_QFI} and
\cref{eq:app_gradient_individual_QFIs} yields
\begin{equation}
    \eta^2
    =
    8\gamma t.
\label{eq:app_gradient_eta}
\end{equation}
This is a scaling advantage in which the differential field is HL, whereas
either local field is only SQL when the other is an unknown nuisance.

\subsection{A general SQL formula for correlated amplitude damping}
\label{app:advantage_SQL_general}

Let
\begin{equation}
    H_0
    =
    \sum_{i=1}^m\theta_iX_i,
    \qquad
    L_a
    =
    \sum_{i=1}^m\ell_{ai}\sigma_i^-,
\label{eq:app_correlated_damping_model}
\end{equation}
where the coefficients $\ell_{ai}$ are real, and define
\begin{equation}
    K
    :=
    \ell^T\ell.
\label{eq:app_K_definition}
\end{equation}
We write $K^+$ for the Moore--Penrose pseudoinverse.

Consider an embedding
$G(\bs\beta)=\sum_i\beta_iX_i$. The terms $h_0I$ and
$\mathbf L^\dagger\mathfrak h\mathbf L$ preserve excitation number,
whereas $G(\bs\beta)$ and
$\mathbf h^\dagger\mathbf L+\mathbf L^\dagger\mathbf h$ change it by
one. The excitation-changing part of
$\mathcal G_h(\bs\beta)=0$ therefore requires
\begin{equation}
    \ell^T\mathrm{Re}(\mathbf h)
    =
    -\bs\beta,
    \qquad
    \ell^T\mathrm{Im}(\mathbf h)
    =
    0.
\label{eq:app_damping_feasibility_equations}
\end{equation}
This condition is feasible if and only if
$\bs\beta\in\mathrm{Range}(\ell^T)
=\mathrm{Range}(K)$.

Let $\ket{\Omega}=\ket{0}^{\otimes m}$ be the vacuum. Since
$L_a\ket{\Omega}=0$,
\begin{equation}
    \|\mathsf A_h\|
    \geq
    \bra{\Omega}\mathsf A_h\ket{\Omega}
    =
    \|\mathbf h\|_2^2
    \geq
    \bs\beta^TK^+\bs\beta.
\label{eq:app_damping_A_lower_bound}
\end{equation}
The last step is the minimum-norm solution of the first equation in
\cref{eq:app_damping_feasibility_equations}. When
$\bs\beta\in\mathrm{Range}(K)$, equality is attained by
\begin{equation}
    \mathbf h^*
    =
    -\ell K^+\bs\beta,
    \qquad
    h_0^*=0,
    \qquad
    \mathfrak h^*=0.
\end{equation}
Indeed, $\ell^T\mathbf h^*=-KK^+\bs\beta=-\bs\beta$ and
$\mathsf A_{h^*}=(\bs\beta^TK^+\bs\beta)I$. If
$F_{\bs\beta,\mathrm{opt}}$ denotes the optimal QFI for a scalar
parameter coupled through $G(\bs\beta)$, AQEC attainability then gives
\begin{equation}
    \lim_{t\rightarrow\infty}
    \frac{F_{\bs\beta,\mathrm{opt}}}{t}
    =
    4\bs\beta^TK^+\bs\beta,
    \qquad
    \bs\beta\in\mathrm{Range}(K).
\label{eq:app_damping_direction_rate}
\end{equation}
If $\bs\beta\notin\mathrm{Range}(K)$, the SQL constraint is
infeasible and \cref{eq:app_damping_direction_rate} does not apply.

Within the span of the $X_i$ generators, the directions contained in
the Lindblad span are exactly $\mathrm{Range}(K)$. The functional
HNLS condition is consequently
\begin{equation}
    \mathrm{Range}(K)
    \subseteq
    \ker(\bs\alpha^T)
    \quad\Longleftrightarrow\quad
    K\bs\alpha=0.
\label{eq:app_damping_functional_HNLS}
\end{equation}
Suppose instead that $K\bs\alpha\neq0$, so the function is SQL-limited.
For every admissible $\bs\beta\in\mathrm{Range}(K)$,
\begin{align}
    1
    &=(\bs\alpha^T\bs\beta)^2
    \nonumber\\
    &=
    \left[
        (K^{1/2}\bs\alpha)^T
        ((K^+)^{1/2}\bs\beta)
    \right]^2
    \nonumber\\
    &\leq
    (\bs\alpha^TK\bs\alpha)
    (\bs\beta^TK^+\bs\beta).
\label{eq:app_damping_CS_bound}
\end{align}
Equality is attained by
\begin{equation}
    \bs\beta^*
    =
    \frac{K\bs\alpha}{\bs\alpha^TK\bs\alpha}.
\end{equation}
Therefore,
\begin{equation}
    \lim_{t\rightarrow\infty}
    \frac{F_{q,\mathrm{opt}}}{t}
    =
    \frac{4}{\bs\alpha^TK\bs\alpha}.
\label{eq:app_damping_function_rate}
\end{equation}

For every relevant index with $K_{ii}>0$, the same formula applied to
$\bs\alpha=\bs e_i$ gives
\begin{equation}
    \lim_{t\rightarrow\infty}
    \frac{F_{\theta_i,\mathrm{opt}}}{t}
    =
    \frac{4}{K_{ii}}.
\label{eq:app_damping_individual_rate}
\end{equation}
If $K_{ii}=0$, positivity of $K$ implies $K\bs e_i=0$, so
$\theta_i$ instead satisfies functional HNLS and the SQL expression
does not apply. Assuming $K_{ii}>0$ for all indices with
$\alpha_i\neq0$, \cref{eq:app_advantage_definition} becomes
\begin{equation}
    \eta^2
    =
    \frac{
        \sum_i\alpha_i^2K_{ii}
    }{
        \bs\alpha^TK\bs\alpha
    }.
\label{eq:app_damping_eta_general}
\end{equation}
Thus, the SQL advantage compares the local loss strengths on the
diagonal of $K$ with the damping of the collective target profile.

\subsubsection{Fourier mode on a periodic chain}
\label{app:advantage_Fourier}

Take $m\geq3$, use indices modulo $m$, and let
\begin{equation}
    L_j
    =
    \sqrt\gamma
    \left(
        \sigma_j^- - \sigma_{j+1}^-
    \right),
    \qquad
    j=0,\ldots,m-1.
\label{eq:app_Fourier_jumps}
\end{equation}
Then
\begin{equation}
    K
    =
    \gamma(2I-T-T^\dagger),
\end{equation}
where $T$ translates by one site. For
\begin{equation}
    \alpha_j
    =
    \cos\left(\frac{2\pi j}{m}\right),
\end{equation}
we have
\begin{equation}
    K\bs\alpha
    =
    4\gamma\sin^2\left(\frac{\pi}{m}\right)\bs\alpha,
    \qquad
    \|\bs\alpha\|_2^2
    =
    \frac{m}{2}.
\end{equation}
Functional HNLS therefore fails for this nonzero Fourier mode. Using
\cref{eq:app_damping_function_rate} gives
\begin{equation}
    \lim_{t\rightarrow\infty}
    \frac{F_{q,\mathrm{opt}}}{t}
    =
    \frac{2}{
        \gamma m\sin^2(\pi/m)
    }.
\label{eq:app_Fourier_function_rate}
\end{equation}
Because $K_{jj}=2\gamma$,
\begin{equation}
    \lim_{t\rightarrow\infty}
    \frac{F_{\theta_j,\mathrm{opt}}}{t}
    =
    \frac{2}{\gamma}.
\end{equation}
The advantage at fixed $m$ is
\begin{equation}
    \eta
    =
    \frac{1}{\sqrt{2}\sin(\pi/m)},
\label{eq:app_Fourier_eta_exact}
\end{equation}
and hence
\begin{equation}
    \eta
    \sim
    \frac{m}{\sqrt{2}\pi},
    \qquad
    m\rightarrow\infty.
\label{eq:app_Fourier_eta_asymptotic}
\end{equation}
Thus, both strategies remain SQL-limited in time, but direct estimation
of the long-wavelength Fourier mode has a linear asymptotic advantage
in system size.

 \end{document}